\newcommand{\keywords}[1]{\par\noindent\textbf{Keywords:}%
 #1}
\newcommand{\jel}[1]{\par\noindent\textbf{JEL classification codes:}%
 #1}
\newcommand{\rnd}[1]{\left(#1\right)}
\newcommand{\crly}[1]{\left\{#1\right\}}
\newcommand{\set}[2]{\left\{#1 \mid #2\right\}}
\newcommand{\map}[3]{#1: #2 \to #3}
\newcommand{\card}[1]{\left|#1\right|}
\newtheorem{proposition}{Proposition}
\newtheorem{lemma}{Lemma}
\theoremstyle{definition}
\newtheorem{definition}{Definition}
\theoremstyle{remark}
\newtheorem{remark}{Remark}
\title
{
 Extractive Contest Design\thanks
 {
  The author is grateful to
  Kazuo Yamaguchi
  for his valuable comments.
  This work was supported by JSPS KAKENHI Grant Number JP19K01563.
 }
}
\author
{
 Tomohiko Kawamori\thanks
 {
  Faculty of Economics,
  Meijo University,
  1-501 Shiogamaguchi, Tempaku-ku, Nagoya 468-8502, Japan.
  {\tt kawamori@meijo-u.ac.jp}
 }
}
\date
{
}
\begin{document}

\maketitle

\abstract
{
 We consider contest success functions (CSFs) that extract contestants' prize values.
 In the common-value case,
 there exists a CSF extractive in any equilibrium.
 In the observable-private-value case,
 there exists a CSF extractive in some equilibrium;
 there exists a CSF extractive in any equilibrium
 if and only if the number of contestants is greater than or equal to three or the values are homogeneous.
 In the unobservable-private-value case,
 there exists no CSF extractive in some equilibrium.
 When extractive CSFs exist,
 we explicitly present one of them.
}

\keywords
{
 contest success function;
 extraction of values;
 common or private values;
 observable or unobservable values;
 aggregate effort equivalence across equilibria
}

\jel
{
 C72;
 D72
}

\newpage
\section{Introduction}\label{sec:introduction}

In the literature on contests,
formalized by \cite{Tullock1980},
the design of contest success functions (CSFs) that maximize the aggregate effort of contestants is a key topic.
Many papers have examined this issue.
Maximization of aggregate effort provides CSFs with a positive foundation.\footnote
{
 \cite{Jia2013} referred to this as the \emph{optimally derived foundation}, one among the four types of foundations.
}${}^,$\footnote
{
 Some papers have provided CSFs with axiomatic foundations
 (e.g., \cite{Skaperdas1996} and \cite{Clark1998}).
}
In the rent-seeking interpretation,
the contest designer (politician) intends to maximize the contestants' efforts.
This is because
if the efforts are political contributions,
he/she obtains monetary benefits from these efforts,
and
if the efforts are political lobbying,
he/she flaunts his/her power through the efforts.
Thus,
he/she should determine the CSF as it maximizes the efforts.

This paper considers the design of CSFs that extract the contestants' values.
Most of the existing literature has investigated the maximization of aggregate effort in a class of CSFs that satisfy some restrictions.
Owing to these restrictions,
the aggregate effort is generically less than the highest value of the contestants,
i.e.,
the optimal CSF does not extract contestants' values.
Instead,
this paper
considers the design of CSFs without restrictions
and examines the extraction of contestants' values.

We define extractiveness of CSFs.
In a contest,
contestants make an effort,
and
the winner of a prize is determined according to a probability distribution that depends on the efforts.
A function that maps effort tuples to winning probability distributions is called a CSF.
We consider the design of CSFs that extract the contestants' prize values through the contestants' efforts.
We say that a CSF is \emph{extractive}
if
under this CSF,
there exists a Nash equilibrium such that
the aggregate effort is equal to the maximum value (the maximum of the contestants' prize values).
We also say that a CSF is \emph{strictly extractive}
if
this CSF is extractive,
and
under this CSF,
in every Nash equilibrium,
the aggregate effort is equal to the maximum value.
We consider pure-strategy Nash equilibria.

We examine extractiveness of CSFs,
focusing on
whether the contestants' prize values are common or private
and whether they are observable or unobservable by the contest designer.
First,
we consider the case where these values are common.
We consider the CSF such that
the winning probabilities are proportional to the $\frac{a}{a - 1}$th power of the efforts,
where $a$ is in $\crly{2,3,\dots,n}$ ($n$ is the number of contestants).
We show that
the CSF with $a = 2$ is strictly extractive for all common values.
Because this CSF does not depend on common values,
this result holds regardless of whether they are observable or unobservable.
For every common value,
we also show that
the CSF with $a \geq 3$ is extractive,
but not strictly extractive.
Thus,
aggregate effort equivalence across Nash equilibria
holds for $a = 2$,
but not for $a \geq 3$.
Second,
we consider the case where the values are private and observable.
For every value tuple,
we present an extractive CSF.
For every value tuple,
we show that
there exists a strictly extractive CSF
if and only if
the number of contestants is greater than or equal to $3$
or the values are homogeneous.
When strictly extractive CSFs exist,
we present one of them.
The aggregate effort equivalence across Nash equilibria holds
if and only if
the number of contestants is greater than or equal to $3$
or the values are homogeneous.
Third,
we consider the case where the values are private and unobservable.
We show that
there exists no CSF that is extractive for all value tuples.
Therefore,
observability matters in the private-value case,
but not in the common-value case.

Several papers have presented CSFs that are reduced to CSFs extractive in the unobservable-common-value case.
In the $2$-contestant case,
\cite{Nti2004} (\cite{Epstein2013}; \cite{Ewerhart2017}, resp.)\ presented a CSF that maximizes the aggregate effort in a class of CSFs
(Section 4 (Subsection 4.2; Proposition 6, resp.)).
In the $n$-contestant common-value case,
\cite{Michaels1988} did so
(Subsection 2.1).
Each CSF in \cite{Nti2004}, \cite{Epstein2013} and \cite{Ewerhart2017} in the common-value case (the CSF in \cite{Michaels1988}, resp.)\ is the CSF such that the winning probabilities are proportional to the $2$nd ($\frac{n}{n - 1}$th, resp.)\ power of the efforts,
and
it is in the unobservable-common-value case
because it does not depend on the common value.
The CSF using the $2$nd power is strictly extractive in the $2$-contestant case.
We show that
this CSF is strictly extractive in the $n$-contestant case.
The CSF using the $\frac{n}{n - 1}$th power is extractive.
We show that
if $n \geq 3$,
this CSF is not strictly extractive.
\cite{Perez-Castrillo1992} derived Nash equilibria under the CSF such that the winning probabilities are proportional to the $r$th power of the efforts in the $n$-contestant case
(Proposition 4).
The result of \cite{Perez-Castrillo1992} implies that
CSFs such that the winning probabilities are proportional to the $\frac{a}{a - 1}$th power of the efforts ($2 \leq a \leq n$) are extractive.
We show that
if $a \geq 3$,
this CSF is not strictly extractive.

Several papers have presented CSFs that are extractive but not strictly extractive in the observable-value case.
In the $2$-contestant common-value case,
\cite{Glazer1993} presented a CSF such that
a certain contestant wins if his/her effort is equal to his/her value,
and the other contestant wins otherwise
(Subsection 3.1).
In the $2$-contestant case (the $n$-contestant case, resp.),
\cite{Nti2004} (\cite{Franke2018}, resp.)\ presented a CSF such that
a contestant with the maximum value wins if his/her effort is greater than or equal to his/her value,
and a contestant with the second highest value, which may be equal to the maximum value, wins otherwise
(Proposition 2 (Proposition 4.7, resp.)).
These CSFs are extractive.
However,
they are not strictly extractive,
because there exists a Nash equilibrium such that
every contestant's effort is zero.
\cite{Nti2004} suggested that
under a modified CSF such that the effort threshold is slightly lowered,
the Nash equilibrium such that every contestant's effort is zero is removed.
However,
under the modified CSF,
in a unique Nash equilibrium,
the aggregate effort is slightly smaller than the maximum value.
Meanwhile,
in the $3$-or-more-contestant or homogeneous-value case,
we present strictly extractive CSFs.
In any Nash equilibrium under such CSFs,
the aggregate effort is exactly equal to the maximum value.
In the $2$-heterogeneous-contestant case,
we show that there exists no strictly extractive CSF.
This implies that
the contest designer cannot design a strictly extractive CSF
even though he/she can fully use the information of the values.

Several papers have shown the extraction of values in mixed-strategy Nash equilibria.
\cite{Hillman1989} (\cite{Che1998}; \cite{Baye1993}; \cite{Baye1996}, resp.)\ showed that
in the all-pay auction,
if the highest two values are equal,
the expected aggregate effort is equal to the maximum value
in any mixed-strategy Nash equilibrium
(Proposition 1; the second last paragraph in Section 3 (equation (9); Theorem 1, resp.)).
\cite{Ewerhart2017a} showed it in a modified all-pay auction
(Proposition 1).
\cite{Alcalde2010} showed that
under CSFs that satisfy certain conditions,
there exists a mixed-strategy Nash equilibrium such that
if the highest two values are equal,
the expected aggregate effort is equal to the maximum value 
(Theorem 3.2).
We show the extraction of values in pure-strategy Nash equilibria,
even if the highest two values are not equal.

Several papers have considered maximizing aggregate effort in a class of CSFs.
CSFs with the following devices have been examined:
concave technologies\footnote
{
 A concave technology is a concave function that transforms efforts.
 Winning probabilities are determined based on the transformed efforts.
}
in the lottery contest
(\cite{Dasgupta1998});
concave technologies and power technologies\footnote
{
 A power technology is a power function that transforms efforts.
}
in the lottery contest
(\cite{Nti2004});
power technologies in the lottery contest
(\cite{Michaels1988});
biases multiplying efforts in the lottery contest
(\cite{Franke2013});
biases multiplying efforts with power technologies in the lottery contest and biases multiplying efforts in the all-pay auction
(\cite{Epstein2013});\footnote
{
 \cite{Epstein2011} considered the same class of CSF,
 but a different objective of the contest designer,
 which is the weighted sum of the aggregate effort and welfare.
}
biases multiplying efforts in the lottery contest and all-pay auction
(\cite{Franke2014});
head starts added to efforts in the lottery contest and all-pay auction
(\cite{Franke2014a});
biases multiplying efforts given a power technology in the lottery contest
(\cite{Ewerhart2017});
biases multiplying efforts and head starts added to efforts
in the lottery contest and all-pay auction
(\cite{Franke2018}).
\cite{Fang2002}
compared the simple lottery contest with the simple all-pay auction.
Owing to restrictions on the forms of CSFs,
the maximized aggregate effort is not equal to the maximum value except for the aforementioned results.
In our paper,
because no restrictions are imposed on the forms of CSFs,
the values are extracted.

Several papers have considered the aggregate effort under asymmetric information.
In
\cite{Kirkegaard2012},
\cite{Perez-Castrillo2016},
\cite{Matros2016},
\cite{Drugov2017}
and \cite{Olszewski2020},
the values or productivities of the efforts are private information.
In our paper,
contestants know the contestants' values;
the contest designer knows them in the observable-value case,
but not in the unobservable-value case.

The contributions of this paper are as follows.
In the observable-private-value case,
we present strictly extractive CSFs in the $3$-or-more-contestant or homogeneous-value subcase,
where
we only use the pure-strategy Nash equilibria,
whereas we show that
there exists no strictly extractive CSF in the other subcase.
In the unobservable-private-value case,
we show that
there exists no extractive CSF.
In the common-value case,
we show that
the CSF with the $2$nd-power technology is strictly extractive in the multi-contestant contest,
and
the CSF with the $\frac{a}{a - 1}$th-power technology ($a \geq 3$) is not strictly extractive.
We demonstrate that
for extractive or strictly extractive CSFs to exist,
observability of values matters in the private-value case,
but not in the common-value case.
The framework in this paper could serve as a general framework for investigating the extraction of values in contests.

The remainder of this paper is organized as follows.
Section \ref{sec:model} describes the model.
Section \ref{sec:results} presents the results.
Section \ref{sec:conclusion} concludes the paper.
The proofs of all the propositions are provided in the appendix.

\section{Model}\label{sec:model}

For
any sets $X$, $Y$ and $I$,
any $\map{f}{X}{Y^I}$
and any $x \in X$ and $i \in I$,
let $f_i\rnd{x}$ be the value of $f\rnd{x}$ for $i$.

Let $N$ be a finite set such that $\card{N} \geq 2$:
$N$ is the set of contestants.
Let $n := \card{N}$.
Let $X := \mathbb R_{\geq 0}^N$:
$X$ is the set of tuples of contestants' efforts.
Let $\Delta := \set{p \in \mathbb R_{\geq 0}^N}{\sum_{i \in N} p_i = 1}$:
the set of tuples of contestants' success probabilities
(for any $p \in \Delta$ and any $i \in N$,
$p_i$ is the probability of contestant $i$'s winning).
Let
$F := \set{f}{\map{f}{X}{\Delta}}$:
$F$ is the set of contest success functions (CSFs).
Let $V := \mathbb R_{> 0}^N$:
the set of tuples of contestants' prize values.
For any $f \in F$ and any $v \in V$,
let $\map{u^{f v}}{X}{\mathbb R^N}$ such that
for any $x \in X$ and any $i \in N$,
$u_i^{f v}\rnd{x} = f_i\rnd{x} v_i - x_i$:
$u_i^{f v}\rnd{x}$ is contestant $i$'s utility from effort tuple $x$
($f_i\rnd{x} v_i$ is the expected value that he/she obtains,
and $x_i$ is the cost of his/her effort).

For any $f \in F$ and any $v \in V$,
$\rnd{N,X,u^{f v}}$ is a strategic-form game:
$N$ is the set of players,
$X$ is the set of strategy tuples,
and $u^{f v}$ is the function that maps each strategy tuple to its payoff tuple.
For any $f \in F$ and any $v \in V$,
let $E^{f v}$ be the set of pure-strategy Nash equilibria in $\rnd{N,X,u^{f v}}$.
In the following,
we refer to a pure-strategy Nash equilibrium simply as a Nash equilibrium.

Let $\hat V := \set{v \in V}{\rnd{\forall i,j \in N} v_i = v_j}$:
the set of value tuples such that all contestants have a common value.
For any $v \in V$,
let
$m^v:= \max_{i \in N} v_i$
and $M^v := \arg\max_{i \in N} v_i$:
$m^v$ is the maximum of contestants' prize values,
and $M^v$ is the set of contestants with the maximum value.

\section{Results}\label{sec:results}

We refer to
the case where the domain of the value tuples is $\hat V$ ($V$, resp.)\ 
as
the \emph{common-value case} (\emph{private-value case}, resp.).
We also refer to
the case where value tuples are observable (unobservable, resp.)\ by the contest designer, i.e., CSFs can (cannot, resp.)\ depend on value tuples
the \emph{observable-value case} (\emph{unobservable-value case}, resp.).
We seek CSFs under which the equilibrium aggregate effort is equal to the maximum value
in
the common-value case,
the observable-private-value case
and the unobservable-private value case,
respectively.
We say that
in the private-value case,
if $v \in V$ satisfies that
for any $i,j \in N$,
$v_i = v_j$,
we say that
\emph{$v$ is homogeneous}.
The value tuples are observable by the contestants.

\subsection{Bound of aggregate effort}

For any CSF and any value tuple,
in any Nash equilibrium,
the aggregate effort is less than or equal to the maximum value.
\begin{proposition}\label{prop:bound_of_total_effort}
 Let
 $f \in F$
 and $v \in V$.
 Let $x^\ast \in E^{f v}$.
 Then,
 $\sum_{i \in N} x_i^\ast \leq m^v$.
\end{proposition}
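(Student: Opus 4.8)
The plan is to show that the aggregate effort in any equilibrium cannot exceed the maximum value by a simple summation-and-bounding argument applied to the equilibrium conditions. First I would fix an arbitrary Nash equilibrium $x^\ast \in E^{f v}$ and record, for each contestant $i$, the equilibrium requirement that $i$ cannot profitably deviate. The cleanest deviation to use is $i$ dropping out entirely, i.e. setting their own effort to $0$ while holding the others fixed. Writing $\rnd{0, x_{-i}^\ast}$ for this deviation, the equilibrium inequality $u_i^{f v}\rnd{x^\ast} \geq u_i^{f v}\rnd{0, x_{-i}^\ast}$ becomes $f_i\rnd{x^\ast} v_i - x_i^\ast \geq f_i\rnd{0, x_{-i}^\ast} v_i - 0$, whence $x_i^\ast \leq \rnd{f_i\rnd{x^\ast} - f_i\rnd{0, x_{-i}^\ast}} v_i$.

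From this per-contestant bound I would derive the individually clean estimate $x_i^\ast \leq f_i\rnd{x^\ast} v_i$, which follows immediately because $f_i\rnd{0, x_{-i}^\ast} \geq 0$ by definition of a CSF (its values lie in $\Delta \subseteq \mathbb R_{\geq 0}^N$). Thus each contestant's effort is dominated by the expected prize value they obtain in equilibrium. The next step is to sum over all $i \in N$ and to replace each $v_i$ by the maximum value $m^v$, giving
\begin{equation*}
 \sum_{i \in N} x_i^\ast
 \leq \sum_{i \in N} f_i\rnd{x^\ast} v_i
 \leq m^v \sum_{i \in N} f_i\rnd{x^\ast}.
\end{equation*}

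To close the argument I would invoke the fact that $f\rnd{x^\ast} \in \Delta$, so that $\sum_{i \in N} f_i\rnd{x^\ast} = 1$, yielding $\sum_{i \in N} x_i^\ast \leq m^v$ as desired. I do not anticipate a genuine obstacle here: the statement is an economically transparent upper bound (the contest cannot extract more than the value of the prize being awarded), and every inequality used is either the no-deviation condition or a defining property of $F$ and $\Delta$. The only point requiring a little care is choosing the right deviation; the ``drop out'' deviation to $0$ is what makes the cost term vanish and lets the nonnegativity of $f_i$ do the remaining work, so I would be sure to state that choice explicitly rather than rely on a first-order condition, since nothing has been assumed about differentiability or interiority of $f$.
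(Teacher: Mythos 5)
Your proposal is correct and follows essentially the same route as the paper: the deviation to zero effort gives $x_i^\ast \leq f_i\rnd{x^\ast} v_i$ (the paper isolates this as Lemma \ref{lem:bound_of_payoff}), and summing over $i$ with $v_i \leq m^v$ and $\sum_{i \in N} f_i\rnd{x^\ast} = 1$ finishes the argument. The only cosmetic difference is that the paper factors the per-contestant bound into a standalone lemma so it can be reused in later proofs.
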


We say that
a CSF is extractive
if
in some Nash equilibrium,
the aggregate effort is equal to the maximum value.
We say that
a CSF is strictly extractive
if
it is extractive
and
in any Nash equilibrium,
the aggregate effort is equal to the maximum value.
\begin{definition}\label{def:extractiveness}
 Let
 $f \in F$
 and $v \in V$.
 $f$ is \emph{extractive} for $v$
 if
 there exists $x^\ast \in E^{f v}$ such that
 $\sum_{i \in N} x_i^\ast = m^v$.
 $f$ is \emph{strictly extractive} for $v$
 if
 $f$ is extractive for $v$
 and
 for all $x^\ast \in E^{f v}$,
 $\sum_{i \in N} x_i^\ast = m^v$.
\end{definition}

\subsection{Common-value case}\label{subsec:common_values}

We consider the case where
contestants have a common value.
For any $a \in \mathbb N$ such that $2 \leq a \leq n$,
let $f^a \in F$ such that
for any $i \in N$ and any $x \in X$,
\begin{align*}
 f_i\rnd{x}
 = \begin{cases}
    \frac{x_i^{\frac{a}{a - 1}}}{\sum_{j \in N} x_j^{\frac{a}{a - 1}}} & \text{if $\rnd{\exists j \in N} x_j > 0$}\\
    \frac{1}{n} & \text{otherwise}.
   \end{cases}
\end{align*}
Under $f^a$,
winning probabilities are proportional to the $\frac{a}{a - 1}$th power of the efforts.

There exists a CSF strictly extractive for all common values.
\begin{proposition}\label{prop:strict_extractiveness_for_common_values}
 There exists $f \in F$ that is strictly extractive for  all $v \in \hat V$.
\end{proposition}
\begin{remark}
 A purely logical consequence of this proposition is that
 for all $v \in \hat V$,
 there exists $f \in F$ that is strictly extractive for $v$.
 Thus,
 this proposition implies that
 whether the contest designer can observe the common value,
 there exists a strictly extractive CSF.
 Furthermore,
 regardless of the observability,
 there exists an extractive CSF.
\end{remark}
\begin{remark}
 In the proof,
 such CSF $f$ is constructed as $f = f^2$.
\end{remark}
\begin{remark}
 As seen in the proof,
 under
 $f = f^2$
 and any $v \in \hat V$,
 for any $x \in X$,
 $x \in E^{f v}$
 if and only if
 for some $A \in 2^N$ such that $\card A = 2$,
 for any $i \in A$,
 $x_i = \frac{m^v}{2}$
 and
 for any $i \in N \setminus A$,
 $x_i = 0$.
\end{remark}
\begin{remark}
 Under
 $f = f^2$
 and any $v \in \hat V$,
 the aggregate effort equivalence across Nash equilibria holds.
\end{remark}

For all common values,
$f^a$ ($3 \leq a \leq n$) is extractive but not strictly extractive.
\begin{proposition}\label{prop:extractiveness_for_common_values_of_non-quadratic_power_technologies}
 Let $a \in \mathbb N$ such that $3 \leq a \leq n$.
 Let $v \in \hat V$.
 Then,
 $f^a$ is extractive for $v$ and not strictly extractive for $v$.
\end{proposition}
\begin{remark}\label{rem:extractiveness_for_unobservable_values_with_common_values}
 As seen in the proof,
 under $f = f^a$
 and any $v \in \hat V$,
 for any $x \in X$,
 $x \in E^{f v}$
 if
 (i)
 for some $A \in 2^N$ such that $\card A = a$,
 for any $i \in A$,
 $x_i = \frac{m^v}{a}$
 and
 for any $i \in N \setminus A$,
 $x_i = 0$,
 or
 (ii)
 for some $A \in 2^N$ such that $\card A = a - 1$,
 for any $i \in A$,
 $x_i = \frac{m^v a \rnd{a - 2}}{\rnd{a - 1}^3}$
 and
 for any $i \in N \setminus A$,
 $x_i = 0$.
 The aggregate effort in a strategy tuple satisfying (ii) is
 $
  \rnd{a - 1} \frac{v a \rnd{a - 2}}{\rnd{a - 1}^3}
  = \frac{v a \rnd{a - 2}}{\rnd{a - 1}^2}
  < v
 $.
 However,
 for any $a \in \mathbb N^{\mathbb N}$ such that
 $3 \leq a_n \leq n$
 and $\lim_{n \to \infty} a_n = \infty$,
 $
  \lim_{n \to \infty} \frac{v a_n \rnd{a_n - 2}}{\rnd{a_n - 1}^2}
  = v
 $.
\end{remark}
\begin{remark}
 Under
 $f = f^a$
 and any $v \in \hat V$,
 the aggregate effort equivalence across Nash equilibria
 does not hold.
\end{remark}

Under $f^a$ ($2 \leq a \leq n$),
each contestant's effort $x$ is transformed into $x^{\frac{a}{a - 1}}$,
and the winning probabilities are determined in proportion to the transformed efforts.
As $a$ is larger,
elasticity of the transformed effort $x^{\frac{a}{a - 1}}$ to effort $x$,
$\frac{d x^{\frac{a}{a - 1}}/d x}{x^{\frac{a}{a - 1}}/x}
= \frac{a}{a - 1}$,
is smaller;
thus,
in the Nash equilibrium such that the aggregate effort is equal to the maximum value,
the effort of each active contestant is smaller,
but
the number of active contestants is larger.

The results in \cite{Nti2004}, \cite{Epstein2013} and \cite{Ewerhart2017} imply that
$f^2$ is strictly extractive for all $v \in \hat V$ when $n = 2$.
Our paper shows that
$f^2$ is strictly extractive for all $v \in \hat V$ when $n \geq 3$.
The results in \cite{Michaels1988} imply that
$f^n$ is extractive for all $v \in \hat V$.
Our paper shows that
when $n \geq 3$,
for any $v \in \hat V$,
$f^n$ is not strictly extractive for $v$.
The results in \cite{Perez-Castrillo1992} imply that
$f^a$ is extractive for all $v \in \hat V$.
Our paper shows that
when $a \geq 3$,
for any $v \in \hat V$,
$f^a$ is not strictly extractive for $v$.

\subsection{Observable-private-value case}

We consider the case where
contestants have private values,
and the contest designer can observe them and design CSFs dependent on them.

For all value tuples,
there exists a CSF extractive for this value tuple.
\begin{proposition}\label{prop:extractiveness_for_observable_private_values}
 Let $v \in V$.
 Then,
 there exists $f \in F$ that is extractive for $v$.
\end{proposition}
\begin{remark}\label{rem:construction_of_extractive_csf_for_observable_private_values}
 In the proof,
 such CSF $f$ is constructed as follows.
 In the case where $n \geq 3$,
 for some distinct $i,j,k \in N$ such that $i \in M^v$,
 for any $x \in X$,
 if $x_i = m^v$,
 then $f_i\rnd{x} = 1$;
 if $x_i \neq m^v$ and $x_j > 0$,
 then $f_j\rnd{x} = 1$;
 if $x_i \neq m^v$ and $x_j = 0$,
 then $f_k\rnd{x} = 1$.
 In the case where $n = 2$ and $v \in \hat V$,
 $f$ is $f^2$ defined in Subsection \ref{subsec:common_values}.
 In the case where $n = 2$ and $v \notin \hat V$,
 for some $i \in M^v$,
 for any $x \in X$,
 $f_i\rnd{x} = \mathbf 1_{x_i = m^v}$.
\end{remark}

For all value tuples,
if
the number of contestants is greater than or equal to $3$
or
the values of the contestants are homogeneous,
there exists a CSF strictly extractive for this value tuple,
and
otherwise,
there does not.
\begin{proposition}\label{prop:strict_extractiveness_for_observable_private_values}
 Let $v \in V$.
 Then,
 there exists $f \in F$ that is strictly extractive for $v$
 if and only if
 $n \geq 3$ or $v \in \hat V$.
\end{proposition}
\begin{remark}
 In the proof,
 such $f$ is constructed as follows.
 In the case where $n \geq 3$ and the case where $n = 2$ and $v \in \hat V$,
 $f$ is one in the corresponding case in Remark \ref{rem:construction_of_extractive_csf_for_observable_private_values}.
\end{remark}
\begin{remark}
 Aggregate effort equivalence across Nash equilibria
 holds
 if and only if $n = 3$ or $v \in \hat V$.
\end{remark}

The above CSFs make the contestant with the maximum value win with certainty (or probability $\frac{1}{2}$)
if his/her effort is equal to the maximum value (or $\frac{1}{2}$ of the maximum value),
in order that
the aggregate effort is equal to the maximum value.
In the $3$-or-more-contestant or homogeneous-value case,
the above CSFs are designed
as they exclude the Nash equilibrium such that
every contestant's effort is zero.
In the other case,
it is impossible to design an extractive CSF that excludes such a Nash equilibrium
even though the contest designer can fully use the information on the contestants' values.

In the observable-value case,
\cite{Glazer1993}, \cite{Nti2004} and \cite{Franke2018} presented a CSF that is extractive,
but this is not strictly extractive.
For any $v \in V$,
when $n \geq 3$ or $v \in \hat V$,
our paper presents a CSF that is strictly extractive for $v$;
when $n = 2$ and $v \notin \hat V$,
our paper shows that
there exists no CSF that is strictly extractive for $v$.

\subsection{Unobservable-private-value case}

We consider the case where
contestants have private values,
and the contest designer cannot observe them and must design CSFs independent of them.

There exists no CSF that is extractive for all value tuples.
\begin{proposition}\label{prop:extractiveness_for_unobservable_private_values}
 There exists no $f \in F$ that is extractive for all $v \in V$.
\end{proposition}
\begin{remark}
 An immediate consequence of this proposition is that
 there exists no $f \in F$ that is strictly extractive for all $v \in V$.
\end{remark}

Under any CSF,
if
for some value tuple,
the aggregate effort in a Nash equilibrium is equal to the maximum value,
then
for some other value tuple,
it must be less than the maximum value.

In the literature,
it has not been examined
whether there exists a CSF that is extractive for all $v \in V$.
Our paper provides a negative answer to this question.

\section{Conclusion}\label{sec:conclusion}

Table \ref{tab:extractive_or_strictly_extractive_csfs} summarizes the results of this paper,
where
$\phi^{\mathrm{E}}$ is a formula meaning that
$f$ is extractive for $v$,
and
$\phi^{\mathrm{SE}}$ is a formula meaning that
$f$ is strictly extractive for $v$.
Whether the values are common or private is represented by whether the domain of the values is $\hat V$ or $V$.
Whether the values are observable or unobservable is represented by whether the order of the quantifiers is $\rnd{\forall v} \rnd{\exists f}$ or $\rnd{\exists f} \rnd{\forall v}$.
In the common-value case,
regardless of the observability,
there exist extractive and strictly extractive CSFs.
In the observable-private-value case,
there exists an extractive CSF,
but there does not always exist a strictly extractive CSF.
In the unobservable-private-value case,
there exists neither extractive nor strictly extractive CSF.
In the common-value case,
we also present a class of extractive CSFs that
can be used to control the number of active contestants.
\begin{table}[htbp]
 \begin{center}
  \begin{tabular}{|c|c|c|} \hline
   & Observable & Unobservable \\
   & $\rnd{\forall v} \rnd{\exists f}$ & $\rnd{\exists f} \rnd{\forall v}$ \\ \hline
   Common  & $\rnd{\forall v \in \hat V} \rnd{\exists f \in F} \phi^\mathrm{E}$ & $\rnd{\exists f \in F} \rnd{\forall v \in \hat V} \phi^\mathrm{E}$ \\
   $\hat V$ & $\rnd{\forall v \in \hat V} \rnd{\exists f \in F} \phi^\mathrm{SE}$ & $\rnd{\exists f \in F} \rnd{\forall v \in \hat V} \phi^\mathrm{SE}$ \\ \hline
   Private & $\rnd{\forall v \in V} \rnd{\exists f \in F} \phi^\mathrm{E}$ & $\neg \rnd{\exists f \in F} \rnd{\forall v \in V} \phi^\mathrm{E}$ \\
   $V$ & $\rnd{\forall v \in V} \rnd{\rnd{\exists f \in F} \phi^\mathrm{SE} \leftrightarrow n \geq 3 \vee v \in \hat V}$ & $\neg \rnd{\exists f \in F} \rnd{\forall v \in V} \phi^\mathrm{SE}$ \\ \hline
  \end{tabular}
 \end{center}
 \caption{Existence of extractive or strictly extractive CSFs}\label{tab:extractive_or_strictly_extractive_csfs}
\end{table}

In the unobservable-private-value case,
there exists no extractive CSF.
In such a case,
it is necessary to derive CSFs that maximize the expectation of the aggregate effort under some belief on value tuples.
For example,
this problem is formalized as follows:
\begin{align*}
 \max_{\rnd{f,x} \in F \times X^V}&\:
 \int_{v \in V} \sum_{i \in N} x_i\rnd{v} d P\rnd{v}\\
 \text{s.t.}&\:
 \rnd{\forall v \in V} x\rnd{v} \in E^{f v}
 \wedge \text{$x$ is measurable},
\end{align*}
where $P$ is a cumulative distribution function on $V$ (the designer's belief on value tuples).

In this paper,
we consider pure strategies,
but not mixed strategies.
Propositions \ref{prop:extractiveness_for_common_values_of_non-quadratic_power_technologies} and \ref{prop:extractiveness_for_observable_private_values} also hold in mixed strategies 
because
if there exists a pure-strategy Nash equilibrium,
it is also a mixed-strategy Nash equilibrium.
However,
it is not clear whether
Propositions \ref{prop:strict_extractiveness_for_common_values}, \ref{prop:strict_extractiveness_for_observable_private_values} and \ref{prop:extractiveness_for_unobservable_private_values} hold in mixed strategies
because
there might exist mixed-strategy Nash equilibria other than pure-strategy Nash equilibria.

\newpage
\appendix
\section*{Appendix}

\begin{lemma}\label{lem:bound_of_payoff}
 Let
 $f \in F$,
 $v \in V$,
 $x^\ast \in E^{f v}$
 and $i \in N$.
 Then,
 $u_i^{f v}\rnd{x^\ast} \geq 0$,
 and $f_i\rnd{x^\ast} v_i \geq x_i^\ast$.
\end{lemma}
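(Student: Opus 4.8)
The plan is to exploit the Nash equilibrium condition via a single, well-chosen deviation, namely contestant $i$ exerting zero effort. First I note that the two stated conclusions are in fact equivalent: by the definition of $u^{f v}$ we have $u_i^{f v}\rnd{x^\ast} = f_i\rnd{x^\ast} v_i - x_i^\ast$, so $u_i^{f v}\rnd{x^\ast} \geq 0$ holds if and only if $f_i\rnd{x^\ast} v_i \geq x_i^\ast$. Hence it suffices to establish the first inequality, and the second follows immediately.

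To bound the equilibrium payoff of contestant $i$ from below, I would consider the alternative profile $x'$ that agrees with $x^\ast$ on every coordinate except that $x_i' = 0$. Because $f\rnd{x'} \in \Delta$, the winning probability satisfies $f_i\rnd{x'} \geq 0$; because $v \in V = \mathbb R_{> 0}^N$, we have $v_i > 0$. The cost of zero effort is zero, so the deviation payoff is $u_i^{f v}\rnd{x'} = f_i\rnd{x'} v_i - 0 = f_i\rnd{x'} v_i \geq 0$.

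Finally, since $x^\ast \in E^{f v}$, contestant $i$ cannot gain by unilaterally deviating from $x^\ast$ to $x'$, which gives $u_i^{f v}\rnd{x^\ast} \geq u_i^{f v}\rnd{x'} \geq 0$. This proves the first conclusion, and by the equivalence noted above it also yields $f_i\rnd{x^\ast} v_i \geq x_i^\ast$.

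I expect no serious obstacle here; the difficulty is conceptual rather than computational. The one point requiring care is recognizing that the deviation to zero effort is exactly the right one: it sets the cost term to zero while leaving the expected-prize term nonnegative, so that the guaranteed payoff from ``opting out'' is never negative. Everything else is a direct appeal to the equilibrium inequality, with no nontrivial calculation involved.
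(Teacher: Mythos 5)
Your proof is correct and follows essentially the same argument as the paper: deviate to zero effort, observe that the deviation payoff $f_i\rnd{0,x_{-i}^\ast} v_i$ is nonnegative, and apply the Nash equilibrium inequality, with the second conclusion following from the definition of $u_i^{f v}$. The only cosmetic difference is that you state the equivalence of the two conclusions up front, whereas the paper derives the second from the first at the end.
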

\begin{proof}
 Because $x^\ast \in E^{f v}$,
 $u_i^{f v}\rnd{x^\ast}
  \geq u_i^{f v}\rnd{0,x_{-i}^\ast}
  = f_i\rnd{0,x_{-i}^\ast} v_i
  \geq 0
 $.
 Thus,
 $f_i\rnd{x^\ast} v_i \geq x_i^\ast$.
\end{proof}

\begin{proof}[Proof of Proposition \ref{prop:bound_of_total_effort}]
 By Lemma \ref{lem:bound_of_payoff},
 for any $i \in N$,
 $x_i^\ast \leq v_i f_i\rnd{x^\ast} \leq m^v f_i\rnd{x^\ast}$.
 Hence,
 $
  \sum_{i \in N} x_i^\ast
  \leq m^v \sum_{i \in N} f_i\rnd{x^\ast}
  = m^v
 $.
\end{proof}

\begin{lemma}\label{lem:maximization}
 Let
 $v \in \mathbb R_{> 0}$
 and $a,b \in \mathbb N$ such that
 $2 \leq b \leq a$.
 Let $x^\ast := \frac{v a \rnd{b - 1}}{b^2 \rnd{a - 1}}$.
 Let $\map{u}{\mathbb R_{\geq 0}}{\mathbb R}$ such that
 for any $x \in \mathbb R_{\geq 0}$,
 $u\rnd{x} = \frac{x^{\frac{a}{a - 1}}}{x^{\frac{a}{a - 1}} + \rnd{b - 1} \rnd{x^\ast}^{\frac{a}{a - 1}}} v - x$.
 Then,
 $x^\ast \in \arg\max_{x \in \mathbb R_{\geq 0}} u\rnd{x}$.
\end{lemma}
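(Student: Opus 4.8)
The plan is to treat this as an unconstrained single-variable maximization of a benefit-minus-cost function on $\mathbb R_{\geq 0}$, where the benefit term is logistic-shaped and hence the objective is \emph{not} globally concave; so the first-order condition alone will not suffice, and I must also locate the interior stationary point and compare it against the boundary. First I would set $\alpha := \frac{a}{a - 1}$, noting $1 < \alpha \leq 2$ since $a \geq 2$, and abbreviate $c := \rnd{b - 1}\rnd{x^\ast}^{\frac{a}{a - 1}} > 0$ (positive because $b \geq 2$ and $x^\ast > 0$), so that $u\rnd{x} = \frac{x^\alpha}{x^\alpha + c} v - x$ with $u\rnd{0} = 0$. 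On $\rnd{0,\infty}$ the function $u$ is differentiable with
\[
 u'\rnd{x} = v\frac{\alpha c\, x^{\alpha - 1}}{\rnd{x^\alpha + c}^2} - 1 =: g\rnd{x} - 1.
\]
I would then verify the first-order condition $u'\rnd{x^\ast} = 0$ directly: since $\rnd{x^\ast}^\alpha + c = b\rnd{x^\ast}^\alpha$, the stationarity equation $v \alpha c\rnd{x^\ast}^{\alpha - 1} = \rnd{\rnd{x^\ast}^\alpha + c}^2$ collapses to $v \alpha\rnd{b - 1} = b^2 x^\ast$, which is exactly the definition of $x^\ast$.

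The crux is to promote this stationary point to a \emph{global} maximizer despite the lack of concavity. My plan is to show that $g$ is unimodal on $\rnd{0,\infty}$: taking the logarithmic derivative gives $\frac{g'}{g} = \frac{\alpha - 1}{x} - \frac{2\alpha x^{\alpha - 1}}{x^\alpha + c}$, which is positive exactly when $x^\alpha < \frac{\rnd{\alpha - 1}c}{\alpha + 1}$, so $g$ strictly increases and then strictly decreases, with $g\rnd{0^+} = g\rnd{\infty} = 0$. Since $g\rnd{x^\ast} = 1$, I must check that $x^\ast$ sits on the \emph{decreasing} branch, i.e.\ $\rnd{x^\ast}^\alpha > \frac{\rnd{\alpha - 1}c}{\alpha + 1}$; substituting $c = \rnd{b - 1}\rnd{x^\ast}^\alpha$ reduces this to $\rnd{\alpha - 1}\rnd{b - 1} < \alpha + 1$, i.e.\ $\frac{b - 1}{a - 1} < \frac{2a - 1}{a - 1}$, which holds because $b \leq a$. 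This pins the sign pattern of $u'$ as $\rnd{-,+,-}$ across $\rnd{0,\infty}$: $u$ decreases, then increases, then decreases. Consequently the only candidates for the global maximum on $\mathbb R_{\geq 0}$ are the boundary point $0$ and the interior local maximum $x^\ast$.

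Finally I would compare the two candidates. Using $\rnd{x^\ast}^\alpha + c = b\rnd{x^\ast}^\alpha$ gives $u\rnd{x^\ast} = \frac{v}{b} - x^\ast = \frac{v\rnd{a - b}}{b^2\rnd{a - 1}} \geq 0 = u\rnd{0}$, the inequality following from $b \leq a$. Hence $u\rnd{x^\ast} \geq u\rnd{x}$ for every $x \in \mathbb R_{\geq 0}$, i.e.\ $x^\ast \in \arg\max_{x \in \mathbb R_{\geq 0}} u\rnd{x}$, which is the claim. The main obstacle, as flagged, is precisely the non-concavity: everything hinges on establishing the unimodality of $g$ and then correctly placing $x^\ast$ on its decreasing branch, since otherwise the stationary point could be a local minimum or be dominated by the boundary value $u\rnd{0} = 0$.
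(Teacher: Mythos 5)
Your proof is correct, and while it shares the same overall skeleton as the paper's proof --- establish that $u'$ has sign pattern $\rnd{-,+,-}$ on $\rnd{0,\infty}$ with the interior local maximum at $x^\ast$, then settle the matter by the boundary comparison $u\rnd{x^\ast} = \frac{v\rnd{a - b}}{b^2\rnd{a - 1}} \geq 0 = u\rnd{0}$ --- the way you obtain the sign pattern is genuinely different. The paper factors the numerator of $u'\rnd{x}$ explicitly as $\phi\rnd{x}\rnd{x^{\frac{1}{a - 1}} - \rnd{x^\ast}^{\frac{1}{a - 1}}}$, where $\phi$ is a strictly decreasing auxiliary function built from the finite-sum identity $y^a - z^a = \rnd{y - z}\sum_{i = 0}^{a - 1} z^i y^{a - 1 - i}$ (applied with $y = x^{\frac{1}{a-1}}$, $z = \rnd{x^\ast}^{\frac{1}{a-1}}$); the second stationary point $\bar x \in \rnd{0,x^\ast}$ is then produced by the intermediate value theorem from $\phi\rnd{0} > 0 > \phi\rnd{x^\ast}$, and the sign pattern follows from monotonicity of $\phi$. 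You instead write $u' = g - 1$ with $g$ the marginal benefit, prove $g$ is single-peaked by its logarithmic derivative, verify the first-order condition $g\rnd{x^\ast} = 1$ directly, and check that $x^\ast$ lies on the descending branch (your reduction to $b < 2a$ is exactly right), which forces $g - 1$ to cross zero once upward and once downward, the downward crossing being at $x^\ast$. Your route buys two things: it never uses integrality of $a$ --- the argument works verbatim for any real exponent $\alpha \in \rnd{1,2}$, whereas the paper's factorization hinges on $a \in \mathbb N$ --- and it makes the economic structure (unimodal marginal benefit versus constant marginal cost) transparent. The paper's route buys explicitness: both roots of $u'$ are exhibited by elementary algebra and the IVT, with no need to discuss the shape of $g$ or where its peak sits.
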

\begin{proof}
 Let $\map{\phi}{\mathbb R_{\geq 0}}{\mathbb R}$ such that
 for any $x \in \mathbb R_{\geq 0}$,
 \begin{align*}
  \phi\rnd{x}&
  = -
    \rnd{\rnd{2 b - 1} \rnd{x^\ast}^{\frac{a}{a - 1}} + x^{\frac{a}{a - 1}}}
    \sum_{i = 0}^{a - 1} \rnd{x^\ast}^{\frac{i}{a - 1}} x^{\frac{a - 1 - i}{a - 1}}
    +
    b^2 \rnd{x^\ast}^{\frac{2 a - 1}{a - 1}}.
 \end{align*}
 For any $x \in \mathbb R_{\geq 0}$,
 \begin{align*}
  \frac{d u\rnd{x}}{d x}
  = \frac
    {\phi\rnd{x} \rnd{x^{\frac{1}{a - 1}} - \rnd{x^\ast}^{\frac{1}{a - 1}}}}
    {\rnd{x^{\frac{a}{a - 1}} + \rnd{b - 1} \rnd{x^\ast}^{\frac{a}{a - 1}}}^2}.
 \end{align*}
 Note that
 $\phi\rnd{0} = \rnd{b - 1}^2 \rnd{x^\ast}^{\frac{2 a - 1}{a - 1}} > 0$,
 and $\phi\rnd{x^\ast} = - b \rnd{2 a - b} \rnd{x^\ast}^{\frac{2 a - 1}{a - 1}} < 0$.
 Then,
 by the intermediate value theorem,
 there exists $\bar x \in \rnd{0,x^\ast}$ such that
 $\phi\rnd{\bar x} = 0$.
 Let $x \in \mathbb R_{\geq 0}$.
 Because $\phi$ is strictly decreasing,
 $\phi\rnd{x} \gtreqless 0$
 if and only if $x \lesseqgtr \bar x$.
 Thus,
 \begin{align*}
  \frac{d u\rnd{x}}{d x}
  \begin{cases}
   \leq 0 & \text{if $x \leq \bar x$}\\
   \geq 0 & \text{if $\bar x < x \leq x^\ast$}\\
   < 0 & \text{if $x > x^\ast$}.
  \end{cases}
 \end{align*}
 Note that
 $u\rnd{0} = 0 \leq u\rnd{x^\ast}$.
 Then,
 for any $x \in \mathbb R_{\geq 0}$,
 $u\rnd{x^\ast} \geq u\rnd{x}$.
\end{proof}

\begin{lemma}\label{lem:extractiveness_for_common_values_of_power_technologies}
 Let $a \in \mathbb N$ such that $2 \leq a \leq n$.
 Let $f = f^a$.
 Then,
 $f$ is extractive for all $v \in \hat V$.
\end{lemma}
\begin{proof}
 Let $v \in \hat V$.
 Abuse $v$ as the common value
 ($v = m^v = v_i$ for any $i \in N$).
 Let $x^\ast \in X$ such that
 for some $A \in 2^N$ such that $\card{A} = a$,
 for any $i \in A$,
 $x_i^\ast = \frac{v}{a}$
 and
 for any $j \in N \setminus A$,
 $x_j^\ast = 0$.
 Let $i \in A$.
 By Lemma \ref{lem:maximization} with $b = a$,
 for any $x_i \in \mathbb R_{\geq 0}$,
 $u_i^{f v}\rnd{x^\ast} \geq u_i^{f v}\rnd{x_i,x_{-i}^\ast}$.
 Let $j \in N \setminus A$.
 For any $x_j \in \mathbb R_{> 0}$,
 \begin{align*}
  u_j^{f v}\rnd{x_j,x_{-j}^\ast}&
  = \frac{x_j^{\frac{a}{a - 1}}}{x_j^{\frac{a}{a - 1}} + a \rnd{\frac{v}{a}}^{\frac{a}{a - 1}}} v - x_j
  < \frac{x_j^{\frac{a}{a - 1}}}{x_j^{\frac{a}{a - 1}} + \rnd{a - 1} \rnd{\frac{v}{a}}^{\frac{a}{a - 1}}} v - x_j\\&
  = u_i^{f v}\rnd{x_j,x_{-i}^\ast}
  \leq u_i^{f v}\rnd{x^\ast}
  = 0
  = u_j^{f v}\rnd{x^\ast}.
 \end{align*}
 Thus,
 $x^\ast \in E^{f v}$.
 $
  \sum_{i \in N} x_i^\ast
  = a \cdot \frac{v}{a}
  = v
 $.
\end{proof}

\begin{proof}[Proof of Proposition \ref{prop:strict_extractiveness_for_common_values}]
 Let $f = f^2$.

 By Lemma \ref{lem:extractiveness_for_common_values_of_power_technologies},
 $f$ is extractive for all $v \in \hat V$.

 Let $v \in \hat V$.
 Abuse $v$ as the common value
 ($v = m^v = v_i$ for any $i \in N$).
 Let $x^\ast \in E^{f v}$.
 Let
 $A := \set{i \in N}{x_i^\ast > 0}$
 and $\alpha := \card{A}$.
 If $\alpha = 0$,
 for some $i \in N$,
 $
  u_i^{f v}\rnd{x^\ast}
  = \frac{v}{n}
  < \frac{\rnd{2 n - 1} v}{2 n}
  = u_i^{f v}\rnd{\frac{v}{2 n},x_{-i}^\ast}
 $,
 which contradicts that $x^\ast \in E^{f v}$.
 If $\alpha = 1$,
 for some $i \in A$,
 $
  u_i^{f v}\rnd{x^\ast}
  = v - x_i^\ast
  < v - \frac{x_i^\ast}{2}
  = u_i^{f v}\rnd{\frac{x_i^\ast}{2},x_{-i}^\ast}
 $,
 which contradicts that $x^\ast \in E^{f v}$.
 Thus,
 $\alpha \geq 2$.
 Let $i \in \arg\max_{j \in A} x_j^\ast$.
 For any $k \in A$,
 \begin{align*}
  0
  = \frac{\partial u_k^{f v}}{\partial x_k}\rnd{x^\ast}
  = \frac{2 v x_k^\ast \rnd{\sum_{l \in A} \rnd{x_l^\ast}^2 - \rnd{x_k^\ast}^2}}{\rnd{\sum_{l \in A} \rnd{x_l^\ast}^2}^2} - 1.
 \end{align*}
 Thus,
 for any $k \in A \setminus \crly{i}$ such that $x_k^\ast < x_i^\ast$,
 $
  x_i^\ast \rnd{\sum_{l \in A} \rnd{x_l^\ast}^2 - \rnd{x_i^\ast}^2}
  = x_k^\ast \rnd{\sum_{l \in A} \rnd{x_l^\ast}^2 - \rnd{x_k^\ast}^2}
 $,
 $
  \rnd{x_i^\ast - x_k^\ast} \rnd{\sum_{l \in A \setminus \crly{i,k}} \rnd{x_l^\ast}^2 - x_i^\ast x_k^\ast}
  = 0
 $,
 $\sum_{l \in A \setminus \crly{i,k}} \rnd{x_l^\ast}^2 = x_i^\ast x_k^\ast$.
 Suppose that
 for some $j \in A \setminus \crly{i}$,
 $x_j^\ast < x_i^\ast$
 (assumption for contradiction).
 Then,
 $\sum_{l \in A \setminus \crly{i,j}} \rnd{x_l^\ast}^2 = x_i^\ast x_j^\ast$.
 For any $k \in A \setminus \crly{i,j}$,
 $
  \rnd{x_k^\ast}^2
  \leq \sum_{l \in A \setminus \crly{i,j}} \rnd{x_l^\ast}^2
  = x_i^\ast x_j^\ast
  < \rnd{x_i^\ast}^2
 $,
 and
 thus,
 $x_k^\ast < x_i^\ast$.
 Hence,
 for any $k \in A \setminus \crly{i}$,
 $x_k^\ast < x_i^\ast$.
 Thus,
 for any $k \in A \setminus \crly{i}$,
 $x_i^\ast x_j^\ast + \rnd{x_j^\ast}^2 = \sum_{l \in A \setminus \crly{i}} \rnd{x_l^\ast}^2 = x_i^\ast x_k^\ast + \rnd{x_k^\ast}^2$,
 $\rnd{x_j^\ast - x_k^\ast} \rnd{x_j^\ast + x_k^\ast + x_i^\ast} = 0$,
 and $x_j^\ast = x_k^\ast$.
 Thus,
 $\rnd{\alpha - 2} \rnd{x_j^\ast}^2 = x_i^\ast x_j^\ast$,
 and $x_i^\ast = \rnd{\alpha - 2} x_j^\ast$.
 Hence,
 $
  0
  = \frac{\partial u_j^{f v}}{\partial x_j}\rnd{x^\ast}
  = \frac{2 v \rnd{\alpha - 1} \rnd{\alpha - 2} \rnd{x_j^\ast}^3}{\rnd{\rnd{\alpha^2 - 3 \alpha + 3} \rnd{x_j^\ast}^2}^2} - 1
 $,
 $
  x_j^\ast
  = \frac{2 v \rnd{\alpha - 1}\rnd{\alpha - 2}}{\rnd{\alpha^2 - 3 \alpha + 3}^2}
 $,
 and
 $
  \alpha
  \geq 3
 $.
 Thus,
 $
  u_j^{f v}\rnd{x^\ast}
  = \frac{\rnd{x_j^\ast}^2}{\rnd{\rnd{\alpha - 2} x_j^\ast}^2 + \rnd{\alpha - 1} \rnd{x_j^\ast}^2} v - x_j^\ast
  = - \frac{v \rnd{\alpha \rnd{\alpha - 3} + 1}}{\rnd{\alpha^2 - 3 \alpha + 3}^2}
  < 0
 $,
 which contradicts Lemma \ref{lem:bound_of_payoff}.
 Hence,
 for any $j \in A$,
 $x_j^\ast = x_i^\ast$.
 Thus,
 $
  0
  = \frac{\partial u_i^{f v}}{\partial x_i}\rnd{x^\ast}
  = \frac{2 v \rnd{\alpha - 1}}{\alpha^2 x_i^\ast} - 1
 $,
 and
 $
  x_i^\ast
  = \frac{2 v \rnd{\alpha - 1}}{\alpha^2}
 $.
 Hence,
 $
  u_i^{f v}\rnd{x^\ast}
  = \frac{v \rnd{2 - \alpha}}{\alpha^2}
 $.
 Thus,
 by Lemma \ref{lem:bound_of_payoff},
 $\alpha = 2$.
 Hence,
 $x_i^\ast = \frac{v}{2}$.
 Thus,
 for any $j \in A$,
 $x_j^\ast = \frac{v}{2}$.
 Hence,
 $
  \sum_{i \in N} x_i^\ast
  = 2 \cdot \frac{v}{2}
  = v
 $.
\end{proof}

\begin{proof}[Proof of Proposition \ref{prop:extractiveness_for_common_values_of_non-quadratic_power_technologies}]
 By Lemma \ref{lem:extractiveness_for_common_values_of_power_technologies},
 $f$ is extractive for $v$.

 Abuse $v$ as the common value
 ($v = m^v = v_i$ for any $i \in N$).
 Let $x^\ast$ be a strategy tuple such that
 for some $A \in 2^N$ such that $\card{A} = a - 1$,
 for any $i \in A$,
 $x_i^\ast = \frac{v a \rnd{a - 2}}{\rnd{a - 1}^3}$
 and
 for any $j \in N \setminus A$,
 $x_j^\ast = 0$.
 Let $i \in A$.
 By Lemma \ref{lem:maximization} with $b = a - 1$,
 for any $x_i \in \mathbb R_{\geq 0}$,
 $u_i^{f v}\rnd{x^\ast} \geq u_i^{f v}\rnd{x_i,x_{-i}^\ast}$.
 Let $j \in N \setminus A$.
 For any $x_j \in \mathbb R_{> 0}$,
 \begin{align*}
  u_j^{f v}\rnd{x_j,x_{-j}^\ast}&
  = \frac
    {
     x_j
     \rnd{x_j^{\frac{1}{a - 1}} v - x_j^{\frac{a}{a - 1}} - \rnd{a - 1} \rnd{\frac{v a \rnd{a - 2}}{\rnd{a - 1}^3}}^{\frac{a}{a - 1}}}
    }
    {
     x_j^{\frac{a}{a - 1}} + \rnd{a - 1} \rnd{\frac{v a \rnd{a - 2}}{\rnd{a - 1}^3}}^{\frac{a}{a - 1}}
    }\\
  \frac{d \rnd{v x_j^{\frac{1}{a - 1}} - x_j^{\frac{a}{a - 1}}}}{d x_j}&
  = \frac{a}{a - 1} x_j^{\frac{2 - a}{a - 1}} \rnd{\frac{v}{a} - x_j}.
 \end{align*}
 Thus,
 for any $x_j \in \mathbb R_{\geq 0}$,
 \begin{align*}
  u_j^{f v}\rnd{x_j,x_{-j}^\ast}&
  \leq \frac
       {
        x_j
        \rnd{\rnd{\frac{v}{a}}^{\frac{1}{a - 1}} v - \rnd{\frac{v}{a}}^{\frac{a}{a - 1}} - \rnd{a - 1} \rnd{\frac{v a \rnd{a - 2}}{\rnd{a - 1}^3}}^{\frac{a}{a - 1}}}
       }
       {
        x_j^{\frac{a}{a - 1}} + \rnd{a - 1} \rnd{\frac{v a \rnd{a - 2}}{\rnd{a - 1}^3}}^{\frac{a}{a - 1}}
       }\\&
  = -
    \frac
    {
     \rnd{a - 1}
     x_j
     \rnd{\frac{v}{a}}^{\frac{a}{a - 1}}
     \rnd{\rnd{1 + \frac{a \rnd{a - 3} + 1}{\rnd{a - 1}^3}}^{\frac{a}{a - 1}} - 1}
    }
    {
     x_j^{\frac{a}{a - 1}} + \rnd{a - 1} \rnd{\frac{v a \rnd{a - 2}}{\rnd{a - 1}^3}}^{\frac{a}{a - 1}}
    }
  \leq 0
  =u_j^{f v}\rnd{x^\ast}.
 \end{align*}
 Thus,
 $x^\ast \in E^{f v}$.
 $
  \sum_{i \in N} x_i^\ast
  = \rnd{a - 1} \cdot \frac{v a \rnd{a - 2}}{\rnd{a - 1}^3}
  = \frac{\rnd{a - 1}^2 - 1}{\rnd{a - 1}^2} v
  \neq v
 $.
 Thus,
 $f$ is not strictly extractive for $v$.
\end{proof}

\begin{lemma}\label{lem:extractiveness_for_observable_private_values_with_three_or_more_players}
 Let $v \in V$.
 Suppose that
 $n \geq 3$.
 Let $f \in F$ such that
 for some distinct $i,j,k \in N$ such that $i \in M^v$,
 for any $x \in X$,
 (i)
 if $x_i = m^v$,
 then $f_i\rnd{x} = 1$,
 (ii)
 if $x_i \neq m^v$ and $x_j > 0$,
 then $f_j\rnd{x} = 1$,
 and
 (iii)
 if $x_i \neq m^v$ and $x_j = 0$,
 then $f_k\rnd{x} = 1$.
 Then,
 $f$ is strictly extractive for $v$.
\end{lemma}
\begin{proof}
 Let $x^\ast \in X$ such that
 $x_i^\ast = m^v$
 and
 for any $l \in N \setminus \crly{i}$,
 $x_l^\ast = 0$.
 Then,
 $\sum_{l \in N} x_l^\ast = m^v$.
 It suffices to show that
 $E^{f v} = \crly{x_i^\ast}$.

 For any $x_i \in \mathbb R_{\geq 0} \setminus \crly{x_i^\ast}$,
 $
  u_i^{f v}\rnd{x^\ast}
  = 0
  \geq - x_i
  = u_i^{f v}\rnd{x_i,x_{-i}^\ast}
 $.
 For any $l \in N \setminus \crly{i}$ and any $x_l \in \mathbb R_{\geq 0} \setminus \crly{x_l^\ast}$,
 $
  u_l^{f v}\rnd{x^\ast}
  = 0
  \geq - x_l
  = u_l^{f v}\rnd{x_l,x_{-l}^\ast}
 $.
 Thus,
 $x^\ast \in E^{f v}$.

 Let $x \in X \setminus \crly{x^\ast}$.
 If $x_i = m^v$,
 then
 for some $l \in N \setminus \crly{i}$,
 $x_l > 0$,
 and
 $
  u_l^{f v}\rnd{x}
  = - x_l
  < 0
  = u_l^{f v}\rnd{0,x_{-l}}
 $.
 If $x_i \neq m^v$ and $x_j > 0$,
 then
 $
  u_j^{f v}\rnd{x}
  = v_j - x_j
  < v_j - \frac{x_j}{2}
  = u_j^{f v}\rnd{\frac{x_j}{2},x_{-j}}
 $.
 If $x_i \neq m^v$ and $x_j = 0$,
 then
 $
  u_j^{f v}\rnd{x}
  = 0
  < \frac{v_j}{2}
  = u_j^{f v}\rnd{\frac{v_j}{2},x_{-j}}
 $.
 Thus,
 $x \notin E^{f v}$.
\end{proof}

\begin{lemma}\label{lem:extractiveness_for_observable_private_values_with_two_heterogeneous_players}
 Let $v \in V$.
 Suppose that
 $n = 2$
 and $v \notin \hat V$.
 (i)
 Let $f \in F$ such that
 for some $i \in M^v$,
 for any $x \in X$,
 $f_i\rnd{x} = \mathbf 1_{x_i = m^v}$.
 Then,
 $f$ is extractive for $v$.
 (ii)
 Let $f \in F$.
 $f$ is not strictly extractive for $v$.
\end{lemma}
\begin{proof}
 (i)
 Let $j \in N \setminus \crly{i}$.
 Let $x^\ast \in X$ such that
 $x_i^\ast = m^v$,
 and $x_j^\ast = 0$.
 For any $x_i \in \mathbb R_{\geq 0} \setminus \crly{x_i^\ast}$,
 $
  u_i^{f v}\rnd{x^\ast}
  = 0
  \geq - x_i
  = u_i^{f v}\rnd{x_i,x_{-i}^\ast}
 $.
 For any $x_j \in \mathbb R_{\geq 0} \setminus \crly{x_j^\ast}$,
 $
  u_j^{f v}\rnd{x^\ast}
  = 0
  \geq - x_j
  = u_i^{f v}\rnd{x_j,x_{-j}^\ast}
 $.
 Thus,
 $x^\ast \in E^{f v}$.
 $\sum_{k \in N} x_k^\ast = m^v$.

 (ii)
 Let $i,j \in N$ such that $v_i > v_j$.
 Suppose that
 $f$ is extractive for $v$.
 Then,
 there exists $x^\ast \in E^{f v}$ such that
 $x_i^\ast + x_j^\ast = m^v$.
 By Lemma \ref{lem:bound_of_payoff},
 $
  m^v
  = x^\ast_i + x_j^\ast
  \leq f_i\rnd{x^\ast} m^v + f_j\rnd{x^\ast} v_j
 $.
 Thus,
 $f_j\rnd{x^\ast} \rnd{m^v - v_j} \leq 0$.
 Hence,
 $f_j\rnd{x^\ast} = 0$.
 Thus,
 by Lemma \ref{lem:bound_of_payoff},
 $x_j^\ast = 0$.
 Thus,
 $x_i^\ast = m^v$.
 Hence,
 $
  0
  = u_i^{f v}\rnd{x^\ast}
  \geq u_i^{f v}\rnd{0,x_{-i}^\ast}
  = f_i\rnd{0,x_{-i}^\ast} v_i
 $.
 Thus,
 $f_i\rnd{0,x_{-i}^\ast} = 0$.
 Hence,
 $f_i\rnd{0,0} = 0$.
 Let $y^\ast \in X$ such that
 $y_i^\ast = y_j^\ast = 0$.
 Because $x^\ast \in E^{f v}$ and $x_j^\ast = y_j^\ast$,
 for any $y_i \in \mathbb R_{\geq 0}$,
 $
  u_i^{f v}\rnd{y^\ast}
  = f_i\rnd{0,0} v_i
  = 0
  = u_i^{f v}\rnd{x^\ast}
  \geq u_i^{f v}\rnd{y_i,x_{-i}^\ast}
  = u_i^{f v}\rnd{y_i,y_{-i}^\ast}
 $;
 for any $y_j \in \mathbb R_{\geq 0}$,
 $
  u_j^{f v}\rnd{y^\ast}
  = f_j\rnd{0,0} v_j
  = v_j
  \geq f_j\rnd{y_j,y_{-j}^\ast} v_j - y_j
  = u_j^{f v}\rnd{y_j,y_{-j}^\ast}
 $.
 Thus,
 $y^\ast \in E^{f v}$.
 $y_i^\ast + y_j^\ast = 0 \neq m^v$.
 Thus,
 $f$ is not strictly extractive for $v$.
\end{proof}

\begin{proof}[Proof of Proposition \ref{prop:extractiveness_for_observable_private_values}]
 The conclusion follows from Proposition \ref{prop:strict_extractiveness_for_common_values} and Lemmas \ref{lem:extractiveness_for_observable_private_values_with_three_or_more_players} and \ref{lem:extractiveness_for_observable_private_values_with_two_heterogeneous_players}.
\end{proof}

\begin{proof}[Proof of Proposition \ref{prop:strict_extractiveness_for_observable_private_values}]
 The conclusion follows from Proposition \ref{prop:strict_extractiveness_for_common_values} and Lemmas \ref{lem:extractiveness_for_observable_private_values_with_three_or_more_players} and \ref{lem:extractiveness_for_observable_private_values_with_two_heterogeneous_players}.
\end{proof}

\begin{proof}[Proof of Proposition \ref{prop:extractiveness_for_unobservable_private_values}]
 Suppose that
 there exists $f \in F$ that is extractive for all $v \in V$
 (assumption for contradiction).

 Let $v \in V$ such that
 for some $i \in N$,
 for any $j \in N \setminus \crly{i}$,
 $v_i > v_j$.
 Let $x^\ast \in E^{f v}$ such that
 $\sum_{j \in N} x_j^\ast = v_i$.
 By Lemma \ref{lem:bound_of_payoff},
 $
  v_i
  = \sum_{j \in N} x_j^\ast
  \leq \sum_{j \in N} v_j f_j\rnd{x^\ast}
 $.
 Thus,
 $f_i\rnd{x^\ast} = 1$,
 and
 for any $j \in N \setminus \crly{i}$,
 $f_j\rnd{x^\ast} = 0$.
 Hence,
 by Lemma \ref{lem:bound_of_payoff},
 for any $j \in N \setminus \crly{i}$,
 $x_j^\ast = 0$,
 and
 $x_i^\ast = v_i$.

 Let $v,w \in \mathbb R_{> 0}^N$ such that
 for some $i \in N$,
 $v_i = 1$ and $w_i = 2$,
 and
 $v_j < v_i$ and $w_j < w_i$ for any $j \in N \setminus \crly{i}$.
 Then,
 by the assumption for contradiction,
 there exist $x^\ast \in E^{f v}$ and $y^\ast \in E^{f w}$ such that
 $\sum_{j \in N} x_j^\ast = v_i$ and $\sum_{j \in N} y_j^\ast = w_i$.
 Thus,
 $x_i^\ast = 1$ and $y_i^\ast = 2$,
 and
 for any $j \in N \setminus \crly{i}$.
 $x_j^\ast = y_j^\ast = 0$.
 Moreover,
 $f_i\rnd{x^\ast} = f_i\rnd{y^\ast} = 1$.
 Thus,
 $
  u_i^{f w}\rnd{1,y_{-i}^\ast}
  = 2 f_i\rnd{1,y_{-i}^\ast} - 1
  = 2 f_i\rnd{x^\ast} - 1
  = 1
  > 0
  = u_i^{f w}\rnd{y^\ast}
 $,
 which contradicts that $y^\ast \in E^{f w}$.
\end{proof}

\newpage
\bibliographystyle{plainnat}
\bibliography{Extractive_Contest_Design_13}

\end{document}